\newtheorem{theorem}{Theorem}
\newtheorem{remark}{Remark}
\newtheorem{lemma}{Lemma}
\begin{document}

\title{The Capacity Region of the Restricted Two-Way Relay Channel with Any Deterministic Uplink}
\author{Lawrence Ong and Sarah J.\ Johnson
\thanks{This work is supported by the Australian Research Council under the grant DP1093114.}
}
\maketitle

\begin{abstract}
This paper considers the two-way relay channel (TWRC) where two users communicate via a relay. For the restricted TWRC where the uplink from the users to the relay is any deterministic function and the downlink from the relay to the users is any arbitrary channel, the capacity region is obtained. The TWRC considered is restricted in the sense that each user can only transmit a function of its message.
\end{abstract}

\section{Introduction}

The two-way relay channel (TWRC), where two users exchange data through a relay, was first investigated by Wu et al.~\cite{wuchoukung05}. In this model the users cannot communicate directly and must exchange data by first transmitting to the relay, called the {\em uplink}. The relay then processes the data in some way and broadcasts to both users on the {\em downlink}. Common applications that can be modeled by the TWRC include satellite communications, cellular communications via a base station, and indoor wireless communications via a router.  Using relays to facilitate data exchange is now moving from theory to practice following their introduction in the 802.16j (WiMAX) standard.

The capacity region of the two-way relay channel was found for the case where the uplink and the downlink are both binary symmetric adder channels \cite{namchung08}.
 Since then the capacity region of more general TWRCs has been found for only a few classes of channel models: (i) the uplink and the downlink are both finite-field adder channels~\cite{ongmjohnsonit11}, and (ii) the uplink and the downlink are both linear finite-field deterministic channels~\cite{avestimehrsezgin10}. For the Gaussian TWRC, results within $\frac{1}{2}$ bit of the capacity have been obtained~\cite{namchunglee09}.

In all the classes of TWRCs where the capacity is known, the uplink
channels are linear.  In this paper, we derive the capacity region
of another class of TWRC, where  (i) the uplink is any deterministic
channel\footnote{This includes the linear finite-field deterministic
model as a special case but also includes non-linear channels.},
(ii) the downlink is any arbitrary channel\footnote{This includes
the finite-field adder and Gaussian channels as special cases.}, and
(iii) the users' channel inputs can only depend on their respective
messages, and not on their received channel outputs (this is
commonly known as the {\em restricted channel}). To the best of our
knowledge, this is the first class of TWRCs with non-linear uplinks
where the capacity region is found.

Deterministic channels can model networks comprising fixed-capacity links, and can approximate channels with extremely low noise. Another advantage of the deterministic approach is that one can focus on the interaction between the signals arriving from different nodes rather than the background noise of the system \cite{avestimehrsezgin10}.

We will show that the capacity region of the restricted TWRC with a deterministic uplink and an arbitrary downlink can be achieved using the compress-forward (CF) coding scheme. This scheme was first proposed by Cover and El Gamal for the single-source single-destination single-relay channel~\cite{covergamal79}. Using CF the relay does not decode the received signals but rather compresses them (using Wyner-Ziv coding), bins the compressed signals (using the random-binning technique), and sends the bin index.
CF was extended to the TWRC by Rankov and Wittneben~\cite{rankovwittneben06} and Schnurr et al.~\cite{schnurroechtering07}.

In this work, we obtain the capacity of the restricted deterministic-uplink arbitrary-downlink TWRC in Section~\ref{sec:TWRC_capacity} by showing that the capacity outer bound derived using the cut-set argument~\cite{fongyeung11} coincides with the capacity inner bound derived using CF~\cite{schnurroechtering07}. In the absence of noise on the uplink, we can set the ``quantization noise'' of CF to zero, i.e., having the relay directly map its received signals to its transmitted signals. In this case, binning is also not required. In the light of this observation, we present an alternative proof of the capacity region in Section~\ref{sec:TWRC_simpler} using a simpler coding scheme.

\section{Channel Model}

\begin{figure}[t]
\centering
\resizebox{5.1cm}{!}{
\begin{picture}(0,0)%
\includegraphics{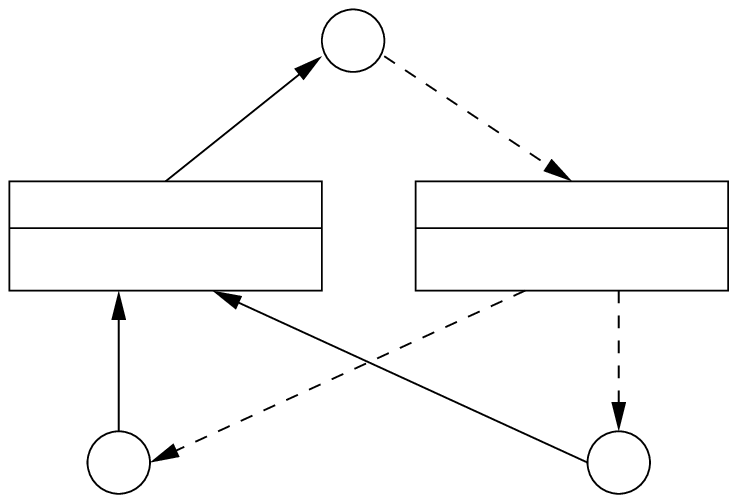}%
\end{picture}%
\setlength{\unitlength}{3947sp}%
\begingroup\makeatletter\ifx\SetFigFont\undefined%
\gdef\SetFigFont#1#2#3#4#5{%
  \fontsize{#1}{#2pt}%
  \fontfamily{#3}\fontseries{#4}\fontshape{#5}%
  \selectfont}%
\fi\endgroup%
\begin{picture}(3627,2635)(1036,-2464)
\put(3526,-361){\makebox(0,0)[lb]{\smash{{\SetFigFont{12}{14.4}{\familydefault}{\mddefault}{\updefault}{\color[rgb]{0,0,0}$X_0$}%
}}}}
\put(1651,-811){\makebox(0,0)[lb]{\smash{{\SetFigFont{12}{14.4}{\familydefault}{\mddefault}{\updefault}{\color[rgb]{0,0,0}uplink}%
}}}}
\put(3526,-811){\makebox(0,0)[lb]{\smash{{\SetFigFont{12}{14.4}{\familydefault}{\mddefault}{\updefault}{\color[rgb]{0,0,0}downlink}%
}}}}
\put(3451,-1486){\makebox(0,0)[lb]{\smash{{\SetFigFont{12}{14.4}{\familydefault}{\mddefault}{\updefault}{\color[rgb]{0,0,0}$Y_1$}%
}}}}
\put(4201,-1486){\makebox(0,0)[lb]{\smash{{\SetFigFont{12}{14.4}{\familydefault}{\mddefault}{\updefault}{\color[rgb]{0,0,0}$Y_2$}%
}}}}
\put(1051,-2386){\makebox(0,0)[lb]{\smash{{\SetFigFont{12}{14.4}{\familydefault}{\mddefault}{\updefault}{\color[rgb]{0,0,0}has $W_1$, wants $W_2$}%
}}}}
\put(3451,-2386){\makebox(0,0)[lb]{\smash{{\SetFigFont{12}{14.4}{\familydefault}{\mddefault}{\updefault}{\color[rgb]{0,0,0}has $W_2$, wants $W_1$}%
}}}}
\put(1426,-1111){\makebox(0,0)[lb]{\smash{{\SetFigFont{12}{14.4}{\familydefault}{\mddefault}{\updefault}{\color[rgb]{0,0,0}$p^*(y_0|x_1,x_2)$}%
}}}}
\put(3451,-1111){\makebox(0,0)[lb]{\smash{{\SetFigFont{12}{14.4}{\familydefault}{\mddefault}{\updefault}{\color[rgb]{0,0,0}$p^*(y_1,y_2|x_0)$}%
}}}}
\put(1666,-2086){\makebox(0,0)[lb]{\smash{{\SetFigFont{12}{14.4}{\familydefault}{\mddefault}{\updefault}{\color[rgb]{0,0,0}$1$}%
}}}}
\put(4066,-2086){\makebox(0,0)[lb]{\smash{{\SetFigFont{12}{14.4}{\familydefault}{\mddefault}{\updefault}{\color[rgb]{0,0,0}$2$}%
}}}}
\put(1351,-1786){\makebox(0,0)[lb]{\smash{{\SetFigFont{12}{14.4}{\familydefault}{\mddefault}{\updefault}{\color[rgb]{0,0,0}$X_1$}%
}}}}
\put(3226,-1936){\makebox(0,0)[lb]{\smash{{\SetFigFont{12}{14.4}{\familydefault}{\mddefault}{\updefault}{\color[rgb]{0,0,0}$X_2$}%
}}}}
\put(2798,-54){\makebox(0,0)[lb]{\smash{{\SetFigFont{12}{14.4}{\familydefault}{\mddefault}{\updefault}{\color[rgb]{0,0,0}$0$}%
}}}}
\put(2101,-361){\makebox(0,0)[lb]{\smash{{\SetFigFont{12}{14.4}{\familydefault}{\mddefault}{\updefault}{\color[rgb]{0,0,0}$Y_0$}%
}}}}
\end{picture}%

}
\caption{The two-way relay channel. For a deterministic uplink, $Y_0 = f (X_1,X_2)$ is a deterministic function.}
\label{fig:twrc} \vspace{-1em}
\end{figure}

The TWRC (see Fig.~\ref{fig:twrc}) consists of three nodes: two users (denoted by nodes 1 and 2) and one relay (denoted by node 0). Let $X_i$ be the channel input from node $i$ and $Y_i$ be the channel output received by node $i$. The general (not necessarily deterministic) TWRC is defined as $p(y_0,y_1,y_2 | x_0,x_1,x_2) = p^*(y_0|x_1,x_2) p^*(y_1,y_2|x_0)$.

We consider $n$ channel uses, and denote the channel variables $X_i$ and $Y_i$ at time $t$ as $X_{i,t}$ and $Y_{i,t}$ respective, for $t \in \{1,2,\dotsc, n\}$. Our block codes consists of the following: (i) An independent message for each user $i$, $W_i \in \{1,2,\dotsc, 2^{nR_i}\}$ for $i \in \{1,2\}$; (ii) Encoding functions for the users, $X_{i,t} = g_{i,t}(W_i)$ for $i \in \{1,2\}$ and $t \in \{1,\dotsc, n\}$; (iii) Encoding functions for the relay, $X_{0,t} = g_{0,t} (Y_{0,1}, Y_{0,2}, \dotsc, Y_{0,t-1})$ for $t \in \{1,\dotsc, n\}$; and (iv) A decoding function for each user, $\hat{W}_j = h_i(W_i,Y_{i,1}, Y_{i,2},\dotsc, Y_{i,n})$. Here $\hat{W}_j$ is the estimate of the message $W_j$ by user $i$, $i \neq j$. Note that in each channel usage, each user transmits a function of its own message (i.e., it is a restricted channel), and the relay transmits a function of its previously received channel outputs.

We say that the rate pair $(R_1,R_2)$ is achievable if the following is true: for any $\epsilon>0$, there exists at least one block code such that $\Pr \{ \hat{W}_1 \neq W_1 \text{ or } \hat{W}_2 \neq W_2\} \leq \epsilon$. The capacity region $\mathcal{C}$ is the closure of all achievable rate pairs.

\section{Capacity Outer Bound and Inner Bound}
We review an outer bound and an inner bound to $\mathcal{C}$. Let
\begin{align}
\mathcal{R}_1 \triangleq \Big\{  &(R_1,R_2) \in \mathbb{R}^2_+:  \nonumber\\
& \quad R_1  \leq I(X_1;Y_0 | X_2) \label{eq:out-1}\\
& \quad R_2 \leq I(X_2;Y_0 | X_1), \label{eq:out-2}\\
& \text{for some } p(x_1,x_2,y_0) =p(x_1)p(x_2) p^*(y_0|x_1,x_2) \Big\}, \nonumber \\
\mathcal{R}_2 \triangleq \Big\{  &(R_1,R_2) \in \mathbb{R}^2_+:  \nonumber\\
& \quad R_1 \leq I(X_0;Y_2) \label{eq:r2-1}\\
& \quad R_2 \leq I(X_0;Y_1), \label{eq:r2-2}\\
&  \text{for some } p(x_0,y_1,y_2) =p(x_0)p^*(y_1,y_2|x_0) \Big\}. \nonumber
\end{align}

Denote the convex hull of a set $\mathcal{R}$ as $\mathsf{Conv}(\mathcal{R})$, and define
\begin{equation}
\mathcal{R}_\text{out} \triangleq \mathsf{Conv}(\mathcal{R}_1) \cap \mathcal{R}_2.
\end{equation}

\begin{remark}
The set $\mathsf{Conv}(\mathcal{R}_1)$ is closed~\cite[pg.\ 625]{shannon61}, and the set $\mathcal{R}_2$ is convex and closed~\cite{fongyeung11}.
\end{remark}

The following outer bound is due to Fong and Yeung~\cite{fongyeung11}:
\begin{lemma} \label{lemma:outer}
$\mathcal{C} \subseteq \mathcal{R}_\text{out}$.
\end{lemma}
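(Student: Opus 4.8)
The plan is a standard converse (Fano plus single-letterization), treating the two membership claims $(R_1,R_2)\in\mathsf{Conv}(\mathcal{R}_1)$ and $(R_1,R_2)\in\mathcal{R}_2$ separately, since $\mathcal{R}_\text{out}$ is their intersection. I fix an achievable pair and a sequence of codes with vanishing error probability; Fano's inequality then gives $H(W_1\mid Y_2^n,W_2)\le n\epsilon_n$ and $H(W_2\mid Y_1^n,W_1)\le n\epsilon_n$ with $\epsilon_n\to0$. Throughout I exploit the two structural features of the model: the users are restricted, so $X_{i,t}=g_{i,t}(W_i)$ depends only on $W_i$ (hence $X_{1,t}$ and $X_{2,t}$ are independent because $W_1$ and $W_2$ are), and the relay input $X_{0,t}=g_{0,t}(Y_0^{t-1})$ is a function of past received symbols only.

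For the downlink membership $(R_1,R_2)\in\mathcal{R}_2$, I write $nR_1=H(W_1\mid W_2)\le I(W_1;Y_2^n\mid W_2)+n\epsilon_n$ and expand $I(W_1;Y_2^n\mid W_2)=\sum_t I(W_1;Y_{2,t}\mid W_2,Y_2^{t-1})$. The key observation is the Markov chain $(W_1,W_2,Y_2^{t-1})\to X_{0,t}\to Y_{2,t}$, valid because the downlink is memoryless and $Y_{2,t}$ depends on the past only through $X_{0,t}$; data processing then bounds each term by $I(X_{0,t};Y_{2,t})$, and symmetrically each term of the $R_2$ chain by $I(X_{0,t};Y_{1,t})$. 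With a uniform time-sharing variable $Q$ this reads $R_1\le I(X_0;Y_2\mid Q)$ and $R_2\le I(X_0;Y_1\mid Q)$; since for each value of $Q$ the corresponding pair lies in $\mathcal{R}_2$ and $\mathcal{R}_2$ is convex (the Remark), the pair $(R_1,R_2)$ lies in $\mathcal{R}_2$.

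For the uplink membership $(R_1,R_2)\in\mathsf{Conv}(\mathcal{R}_1)$, I again start from $nR_1=H(W_1\mid W_2)=I(W_1;Y_0^n\mid W_2)+H(W_1\mid Y_0^n,W_2)$, but now I must argue that the relay forms a genuine cut. This is the step I expect to be the main obstacle: I claim $H(W_1\mid Y_0^n,W_2)\le n\epsilon_n$, which follows from the Markov chain $W_1\to(Y_0^n,W_2)\to Y_2^n$ (valid because $X_0^n$ is a deterministic function of $Y_0^n$ and the downlink is memoryless) combined with the Fano bound $H(W_1\mid Y_2^n,W_2)\le n\epsilon_n$ via data processing for conditional entropy. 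Once this relay cut is in place, I expand $I(W_1;Y_0^n\mid W_2)=\sum_t I(W_1;Y_{0,t}\mid W_2,Y_0^{t-1})$ and bound each term by $I(X_{1,t};Y_{0,t}\mid X_{2,t})$, using that $X_{2,t}$ is a function of $W_2$ and that the uplink is memoryless given $(X_{1,t},X_{2,t})$; the symmetric argument bounds $R_2$ by averages of $I(X_{2,t};Y_{0,t}\mid X_{1,t})$.

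Finally I convexify. Because $X_{1,t}$ and $X_{2,t}$ are independent, each per-symbol pair $\bigl(I(X_{1,t};Y_{0,t}\mid X_{2,t}),\,I(X_{2,t};Y_{0,t}\mid X_{1,t})\bigr)$ is achieved by a product input distribution and hence lies in $\mathcal{R}_1$; averaging over $t$ exhibits $(R_1,R_2)$, up to $\epsilon_n$, as a convex combination of points of $\mathcal{R}_1$, i.e.\ a point of $\mathsf{Conv}(\mathcal{R}_1)$. Letting $n\to\infty$ and using that both $\mathsf{Conv}(\mathcal{R}_1)$ and $\mathcal{R}_2$ are closed (the Remark) removes the $\epsilon_n$ terms, yielding $(R_1,R_2)\in\mathsf{Conv}(\mathcal{R}_1)\cap\mathcal{R}_2=\mathcal{R}_\text{out}$.
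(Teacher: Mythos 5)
Your converse is correct: the downlink cut via the Markov chain $(W_1,W_2,Y_2^{t-1})\to X_{0,t}\to Y_{2,t}$, the relay cut via $W_1\to(Y_0^n,W_2)\to Y_2^n$ (which is exactly where the restricted-encoder assumption is needed to keep $X_{1,t}$ and $X_{2,t}$ independent), and the closedness of $\mathsf{Conv}(\mathcal{R}_1)$ and $\mathcal{R}_2$ to dispose of the $\epsilon_n$ terms are all sound. The paper does not prove this lemma itself but imports it from Fong and Yeung as a cut-set bound, and your argument is essentially that cited cut-set derivation, so there is nothing to fault.
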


Using the CF coding scheme, the following rate region is achievable (i.e., an inner bound to the capacity)~\cite{schnurroechtering07}:
\begin{lemma} \label{lemma:inner}
$\mathcal{R}_\text{CF} \subseteq \mathcal{C}$, where
\begin{align}
\mathcal{R}_\text{CF} \triangleq \Big\{  &(R_1,R_2) \in \mathbb{R}^2_+:  \nonumber\\
& \quad R_1 < I(X_1;\hat{Y}_0|X_2,Q) \label{eq:cf-1}\\
& \quad R_2 < I(X_2;\hat{Y}_0|X_1,Q), \label{eq:cf-2}\\
& \quad \text{subject to the constraints} \nonumber\\
& \quad H(\hat{Y}_0|X_1,Q) - H(\hat{Y}_0|Y_0) < I(X_0;Y_1) \label{eq:cf-3}\\
& \quad H(\hat{Y}_0|X_2,Q) - H(\hat{Y}_0|Y_0) < I(X_0;Y_2),
\label{eq:cf-4}
\end{align}
for some $p(x_0,y_1,y_2) = p(x_0)p^*(y_1,y_2|x_0)$ and \\
$p(q,x_1,x_2,\hat{y}_0,y_0)=p(q)p(x_1|q) p(x_2|q) p(\hat{y}_0 | y_0)
p^*(y_0|x_1,x_2)$ with the cardinality of $Q$ bounded as
$|\mathcal{Q}| \leq 4$  and that for $\hat{Y}_0 \text{ bounded as }
|\hat{\mathcal{Y}}_0| \leq |\mathcal{Y}| + 3\Big\}.$
\end{lemma}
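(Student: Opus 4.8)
The plan is to establish this inner bound by achievability, using the compress--forward (CF) scheme with block-Markov encoding, following Cover--El Gamal and its two-way extension in Schnurr et al.\ \cite{schnurroechtering07}. First I would transmit over $B$ blocks, each of $n$ symbols; in each block user $i$ sends a fresh codeword $x_i^n(w_i)$ drawn i.i.d.\ from $p(x_i\mid q)$ conditioned on a time-sharing sequence $q^n\sim\prod p(q)$, so that the auxiliary $Q$ of the Lemma enters as the common time-sharing variable. The relay uses a quantization codebook of $2^{n\hat R_0}$ sequences $\hat y_0^n(l)$ generated i.i.d.\ from $p(\hat y_0\mid q)$, and a downlink codebook $x_0^n(l)$, indexed by the same $l$, generated i.i.d.\ from $p(x_0)$ and matched to the broadcast channel $p^*(y_1,y_2\mid x_0)$.

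Next I would handle the relay's causality with the usual one-block delay. Having received $y_0^n$ in block $b$, the relay uses the covering lemma to find an index $l_b$ with $(\hat y_0^n(l_b),y_0^n,q^n)$ jointly typical, which succeeds with high probability provided $\hat R_0 > I(Y_0;\hat Y_0\mid Q)$; it then transmits $x_0^n(l_b)$ during block $b+1$. On the receiving side, each user exploits its own channel input as side information: user $1$ knows $x_1^n(w_1)$, which is jointly typical with $\hat y_0^n(l_b)$, and symmetrically for user $2$. I would have each user \emph{jointly} decode the compression index from its downlink output together with this side information, declaring $\hat l$ such that both $(x_0^n(\hat l),y_1^n)$ and $(\hat y_0^n(\hat l),x_1^n(w_1),q^n)$ are typical. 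A union bound over the $2^{n\hat R_0}$ wrong indices then gives the reliability conditions $\hat R_0 < I(X_0;Y_1)+I(\hat Y_0;X_1\mid Q)$ at user $1$ and $\hat R_0 < I(X_0;Y_2)+I(\hat Y_0;X_2\mid Q)$ at user $2$. Having recovered $\hat y_0^n(l_b)$, user $1$ declares the unique $\hat w_2$ with $(x_2^n(\hat w_2),\hat y_0^n(l_b),x_1^n(w_1),q^n)$ jointly typical, which is reliable when $R_2 < I(X_2;\hat Y_0\mid X_1,Q)$, giving \eqref{eq:cf-2}, and symmetrically \eqref{eq:cf-1}.

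The remainder is bookkeeping. Combining the covering bound $\hat R_0 > I(Y_0;\hat Y_0\mid Q)$ with the two compression-recovery bounds and eliminating the free variable $\hat R_0$ by Fourier--Motzkin yields $I(Y_0;\hat Y_0\mid Q)-I(\hat Y_0;X_i\mid Q) < I(X_0;Y_i)$; invoking the Markov chain $\hat Y_0-Y_0-(X_1,X_2,Q)$ to rewrite $I(Y_0;\hat Y_0\mid Q)-I(\hat Y_0;X_i\mid Q)=H(\hat Y_0\mid X_i,Q)-H(\hat Y_0\mid Y_0)$ recovers exactly \eqref{eq:cf-3}--\eqref{eq:cf-4}. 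I would close with a union bound over the $B$ blocks (taking $B\to\infty$ then $n\to\infty$ to absorb the trailing-block rate loss) and with the cardinality bounds $|\mathcal Q|\le 4$ and $|\hat{\mathcal Y}_0|\le|\mathcal Y|+3$ from the support lemma applied to the four bounding functionals. The main obstacle I anticipate is the downlink step: a single broadcast transmission must deliver the common description $\hat Y_0$ to both users, yet each is allowed a \emph{different}, side-information-dependent rate. One must argue that per-user joint-typicality decoding of the compression index is what produces the decoupled pair \eqref{eq:cf-3}--\eqref{eq:cf-4}, whereas forwarding a single common bin index would force the weaker $\min$ of the two downlink mutual informations. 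Verifying this, and hence confirming that no binning is needed here, is the delicate point.
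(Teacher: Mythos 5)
Your sketch is essentially correct, but it is not the route the paper takes: the paper does not prove Lemma~\ref{lemma:inner} at all. It imports the region verbatim from Schnurr et al.~\cite{schnurroechtering07}, remarking only that the half-duplex result carries over to full duplex by setting $\alpha=\beta=1$ and that the strict inequalities in \eqref{eq:cf-1}--\eqref{eq:cf-2} reflect a slightly different definition of achievable rates. What you supply instead is a self-contained achievability argument, and you correctly identify and resolve the one genuinely delicate point: forwarding a single common bin index would degrade the downlink constraint to $\min\{I(X_0;Y_1),I(X_0;Y_2)\}$, so the decoupled constraints \eqref{eq:cf-3}--\eqref{eq:cf-4} do require each user to decode the (un-binned) compression index \emph{jointly} from its downlink output and its own codeword as side information. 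Your elimination of $\hat{R}_0$ between the covering bound and the two packing bounds, together with the identity $I(Y_0;\hat{Y}_0|Q)-I(\hat{Y}_0;X_i|Q)=I(\hat{Y}_0;Y_0|X_i,Q)=H(\hat{Y}_0|X_i,Q)-H(\hat{Y}_0|Y_0)$ (valid by the Markov chain $\hat{Y}_0 - Y_0 - (X_1,X_2,Q)$), reproduces the stated constraints exactly. This is, in specialized form, the same mechanism the paper deploys in Section~\ref{sec:TWRC_simpler} and Appendix~\ref{appendix:a}, where $\hat{Y}_0=Y_0$ and user 1 restricts its search to the list $\mathcal{S}_{w_1}(a)$ of relay indices consistent with its own codeword. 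Two standard steps are elided in your sketch and would need to be made explicit in a full write-up: the Markov lemma guaranteeing that the selected $\hat{y}_0^n(l_b)$ is jointly typical with $(q^n,x_1^n,x_2^n)$ and not merely with $y_0^n$ (this is what legitimizes the final decoding condition $R_2<I(X_2;\hat{Y}_0|X_1,Q)$), and the usual care in the packing bound for the joint decoding of the compression index, since the correct index $l_b$ is itself codebook-dependent. Neither is an obstacle; both are handled by the standard arguments you implicitly invoke.
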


Denote the closure of the set $\mathcal{R}$ by $\overline{\mathcal{R}}$. Since $\mathcal{C}$ is closed,
\begin{equation} \label{eq:cf-capacity}
\mathcal{R}_\text{CF} \subseteq \mathcal{C} \Rightarrow \overline{\mathcal{R}_\text{CF}} \subseteq \overline{\mathcal{C}} = \mathcal{C}.
\end{equation}

The CF achievable region is derived by Schnurr et al.~\cite{schnurroechtering07} for the half-duplex TWRC. The results can be readily extended to the full-duplex TWRC considered in this paper by setting $\alpha=1$ and $\beta=1$. The inequalities in \eqref{eq:cf-1} and \eqref{eq:cf-2} are strict due to the slight difference in the definition of achievable rate pairs in this paper and that in \cite{schnurroechtering07}.

\section{The Capacity of the TWRC with a Deterministic Uplink} \label{sec:TWRC_capacity}

If the uplink of the TWRC is deterministic, we have that
\begin{equation}
p^*(y_0 | x_1,x_2) = \begin{cases} 1, & \text{if } y_0 = f(x_1,x_2)\\
0, & \text{otherwise},
\end{cases} \label{eq:deterministic}
\end{equation}
for some deterministic function $f(x_1,x_2)$. For this channel, we have the following capacity result:
\begin{theorem} \label{theorem}
The capacity region of any restricted deterministic-uplink arbitrary-downlink TWRC is $\mathcal{C} = \mathcal{R}_\text{out}$.
\end{theorem}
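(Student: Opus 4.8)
The plan is to prove $\mathcal{C} = \mathcal{R}_\text{out}$ by establishing both inclusions. The outer bound $\mathcal{C} \subseteq \mathcal{R}_\text{out}$ is already supplied by Lemma~\ref{lemma:outer}, so the entire burden falls on the reverse inclusion $\mathcal{R}_\text{out} \subseteq \mathcal{C}$. To obtain this, I would show that when the uplink is deterministic, the CF inner bound of Lemma~\ref{lemma:inner} can be specialized so that its closure equals $\mathcal{R}_\text{out}$; combined with \eqref{eq:cf-capacity}, this gives $\mathcal{R}_\text{out} = \overline{\mathcal{R}_\text{out}} \subseteq \mathcal{C}$.

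\textbf{Choosing the compression variable.}

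The key observation, flagged in the introduction, is that a deterministic uplink means there is no ``quantization noise'' to contend with: the relay can losslessly reproduce its received signal. Concretely, I would set $\hat{Y}_0 = Y_0$ in Lemma~\ref{lemma:inner}, so that the Markov kernel $p(\hat{y}_0|y_0)$ becomes the identity. With this choice, \eqref{eq:cf-1}--\eqref{eq:cf-2} become $R_1 < I(X_1;Y_0|X_2,Q)$ and $R_2 < I(X_2;Y_0|X_1,Q)$, which, after taking the closure and recognizing the time-sharing variable $Q$ as generating the convex hull, should match the constraints \eqref{eq:out-1}--\eqref{eq:out-2} defining $\mathsf{Conv}(\mathcal{R}_1)$. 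The crucial simplification is in the CF constraints \eqref{eq:cf-3}--\eqref{eq:cf-4}: since $\hat{Y}_0 = Y_0 = f(X_1,X_2)$ is a deterministic function of the inputs, we have $H(\hat{Y}_0|Y_0) = H(Y_0|Y_0) = 0$. I would then verify that the left-hand sides $H(\hat{Y}_0|X_1,Q)$ and $H(\hat{Y}_0|X_2,Q)$, which equal $H(Y_0|X_1,Q)$ and $H(Y_0|X_2,Q)$, are exactly the conditional entropies appearing in the mutual informations $I(X_2;Y_0|X_1,Q)$ and $I(X_1;Y_0|X_2,Q)$ respectively (since $Y_0$ is a deterministic function of $(X_1,X_2)$, the term $H(Y_0|X_1,X_2,Q)$ vanishes).

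\textbf{Reconciling the two regions.}

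The main obstacle I anticipate is showing that the downlink constraints \eqref{eq:cf-3}--\eqref{eq:cf-4} do not shrink the achievable region below $\mathcal{R}_\text{out}$. After the substitutions above, \eqref{eq:cf-3}--\eqref{eq:cf-4} read $H(Y_0|X_1,Q) < I(X_0;Y_1)$ and $H(Y_0|X_2,Q) < I(X_0;Y_2)$, i.e., $I(X_2;Y_0|X_1,Q) < I(X_0;Y_1)$ and $I(X_1;Y_0|X_2,Q) < I(X_0;Y_2)$. The point is that these are automatically implied (with strict inequality) whenever the rate constraints \eqref{eq:r2-1}--\eqref{eq:r2-2} defining $\mathcal{R}_2$ are met, because a rate pair in the interior of $\mathcal{R}_\text{out} = \mathsf{Conv}(\mathcal{R}_1) \cap \mathcal{R}_2$ satisfies $R_2 < I(X_0;Y_1)$ and $R_1 < I(X_0;Y_2)$, while the CF rate choice sets $R_1, R_2$ just below the uplink mutual informations. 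I would therefore argue that for any rate pair strictly interior to $\mathcal{R}_\text{out}$, one can pick the CF auxiliary distribution (matching the $X_0$-marginal of the $\mathcal{R}_2$ constraint and the $Q$-mixture of the $\mathsf{Conv}(\mathcal{R}_1)$ constraint) so that all four strict inequalities \eqref{eq:cf-1}--\eqref{eq:cf-4} hold simultaneously, placing the pair in $\mathcal{R}_\text{CF}$. Taking closures then yields $\mathcal{R}_\text{out} \subseteq \overline{\mathcal{R}_\text{CF}} \subseteq \mathcal{C}$, and together with Lemma~\ref{lemma:outer} completes the proof. The delicate part is handling the boundary of $\mathcal{R}_\text{out}$ and confirming that the interior is dense, so that the closure operation recovers the full region rather than just its interior.
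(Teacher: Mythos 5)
Your overall strategy is the same as the paper's: use Lemma~\ref{lemma:outer} for the converse, and specialize the CF region of Lemma~\ref{lemma:inner} with $\hat{Y}_0=Y_0$ so that \eqref{eq:cf-3}--\eqref{eq:cf-4} reduce to $H(Y_0|X_1,Q)<I(X_0;Y_1)$ and $H(Y_0|X_2,Q)<I(X_0;Y_2)$, then handle boundary points by a closure argument (the paper does this carefully by passing to $\mathcal{R}_2$ with strict inequalities and to $\mathcal{R}_\text{CF}$ with non-strict rate inequalities, and showing the closures agree). However, there is a genuine gap exactly at the step you flag as "the main obstacle." Your claim that the constraints $H(Y_0|X_1,Q)<I(X_0;Y_1)$ and $H(Y_0|X_2,Q)<I(X_0;Y_2)$ are "automatically implied" by $r_2<I(X_0;Y_1)$ and $r_1<I(X_0;Y_2)$ is not true for an arbitrary representation of $(r_1,r_2)$ as a point of $\mathsf{Conv}(\mathcal{R}_1)$: a generic convex-combination representation only gives $r_1\leq H(Y_0|X_2,Q)$ and $r_2\leq H(Y_0|X_1,Q)$, possibly with large slack, in which case the corner coordinates $H(Y_0|X_2,Q)$ and $H(Y_0|X_1,Q)$ (which are what \eqref{eq:cf-3}--\eqref{eq:cf-4} actually bound) can exceed $I(X_0;Y_2)$ and $I(X_0;Y_1)$ even though the rates themselves do not.

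What is needed — and what you assert but do not prove — is that every point of $\mathsf{Conv}(\mathcal{R}_1)$ can be written \emph{exactly} as $(H(Y_0|X_2,Q),H(Y_0|X_1,Q))$ for some $p(q)p(x_1|q)p(x_2|q)$. This is the paper's Lemma~\ref{lemma:points} and constitutes the main technical work of the proof. It is not a routine time-sharing identity: one must show that every vertex of $\mathsf{Conv}(\mathcal{R}_1)$ is the top-right corner of some rectangle $\mathcal{R}(X_1,X_2)$, including the degenerate vertices $(0,0)$ and the extreme points on the two axes, where the deterministic-uplink assumption is used essentially (a point-mass input $X_2=x_2$ maximizes $H(Y_0|X_2)$ while forcing $H(Y_0|X_1)=0$, and point masses on both inputs give $(0,0)$); one then argues that every point of the region, interior points included, is a convex combination of at most three such vertices. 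Without this lemma your argument does not go through, since "choosing the $Q$-mixture matching the $\mathsf{Conv}(\mathcal{R}_1)$ constraint" presupposes exactly the representation the lemma provides.
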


\subsection{Points in $\mathsf{Conv}(\mathcal{R}_1)$}

If the uplink is deterministic, we have $H(Y_0 | X_1,X_2) = 0$. So, the RHS of \eqref{eq:out-1} and \eqref{eq:out-2} simplify to $I(X_1;Y_0|X_2) = H(Y_0|X_2) - H(Y_0|X_1,X_2) = H(Y_0|X_2)$ and $I(X_2;Y_0|X_1) = H(Y_0|X_1)$ respectively.
Therefore, we can re-write $\mathcal{R}_1$ as 
\begin{align*}
\mathcal{R}_1 = \Big\{  &(R_1,R_2) \in \mathbb{R}^2_+:  \;  R_1  \leq H(Y_0| X_2),\; R_2 \leq H(Y_0 | X_1), \\ &\;\text{for some } p(x_1,x_2,y_0) =p(x_1)p(x_2)p^*(y_0|x_1,x_2) \Big\}.
\end{align*}

Before proving Theorem~\ref{theorem}, we establish the following:

\begin{lemma} \label{lemma:points}
For some deterministic $p^*(y_0|x_1,x_2)$ as defined in \eqref{eq:deterministic}, any point in $\mathsf{Conv}(\mathcal{R}_1)$ can be written as $(H(Y_0|X_2,Q),H(Y_0|X_1,Q))$ for some $p(q,x_1,x_2,y_0) = p(q) p(x_1|q) p(x_2|q) p^*(y_0|x_1,x_2)$, where $Q$ is an auxiliary random variable with cardinality $|\mathcal{Q}| =3$.
\end{lemma}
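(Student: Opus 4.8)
The plan is to characterise the set of attainable pairs $(H(Y_0|X_2,Q),H(Y_0|X_1,Q))$ directly and match it to $\mathsf{Conv}(\mathcal{R}_1)$. Write $\mathcal{S}$ for the ``corner set''
\[
\mathcal{S}\triangleq\{(H(Y_0|X_2),H(Y_0|X_1)):p(x_1)p(x_2)p^*(y_0|x_1,x_2)\}.
\]
Because $Q$ acts as a time-sharing variable, $H(Y_0|X_2,Q)=\sum_q p(q)\,H(Y_0|X_2,Q=q)$ and similarly for $H(Y_0|X_1,Q)$, so every attainable pair is a convex combination of points of $\mathcal{S}$; conversely, any finite convex combination of points of $\mathcal{S}$ is attained by letting $Q$ select the corresponding product distributions with the corresponding weights. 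Hence the set of realizable pairs is exactly $\mathsf{Conv}(\mathcal{S})$. Since we work in $\mathbb{R}^2$, Carath\'eodory's theorem writes any point of $\mathsf{Conv}(\mathcal{S})$ as a convex combination of at most three points of $\mathcal{S}$ (padded to exactly three by repetition), which is precisely what a three-valued $Q$ supplies and accounts for $|\mathcal{Q}|=3$. It therefore remains to prove $\mathsf{Conv}(\mathcal{R}_1)=\mathsf{Conv}(\mathcal{S})$.

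One inclusion is immediate: $\mathcal{S}\subseteq\mathcal{R}_1$ gives $\mathsf{Conv}(\mathcal{S})\subseteq\mathsf{Conv}(\mathcal{R}_1)$. The reverse inclusion is the heart of the argument, and I would prove it by showing that each achievability rectangle $[0,h_2]\times[0,h_1]$, where $(h_2,h_1)=(H(Y_0|X_2),H(Y_0|X_1))$ for a fixed product distribution, lies inside $\mathsf{Conv}(\mathcal{S})$; taking the union over all product distributions then yields $\mathcal{R}_1\subseteq\mathsf{Conv}(\mathcal{S})$, and hence $\mathsf{Conv}(\mathcal{R}_1)\subseteq\mathsf{Conv}(\mathcal{S})$. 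To cover the rectangle I exhibit three auxiliary corner points of $\mathcal{S}$: the origin $(0,0)$, obtained by making both inputs constant so that $Y_0=f(X_1,X_2)$ is deterministic; the point $(\beta_2,0)$, obtained by freezing $X_2$ to a fixed symbol $c_2$ while keeping $X_1\sim p(x_1)$, which gives $H(Y_0|X_1)=0$ and $H(Y_0|X_2)=H(f(X_1,c_2))=\beta_2$; and symmetrically $(0,\beta_1)$ with $\beta_1=H(f(c_1,X_2))$.

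The key estimate is that $c_1$ and $c_2$ can be chosen so that $\beta_1\ge h_1$ and $\beta_2\ge h_2$. Indeed, $h_1=H(Y_0|X_1)=\sum_{x_1}p(x_1)H(f(x_1,X_2))$ is an average of the quantities $H(f(x_1,X_2))$, so some $c_1$ in the support of $X_1$ attains $H(f(c_1,X_2))\ge h_1$, and $\beta_2\ge h_2$ follows symmetrically. With $\beta_1\ge h_1$ and $\beta_2\ge h_2$, the four rectangle corners all lie in $\mathsf{Conv}\{(0,0),(\beta_2,0),(0,\beta_1),(h_2,h_1)\}$: the corners $(0,0)$ and $(h_2,h_1)$ are vertices, while $(h_2,0)$ and $(0,h_1)$ sit on the segments from the origin to $(\beta_2,0)$ and to $(0,\beta_1)$ respectively. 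As a rectangle is the convex hull of its four corners, the whole rectangle lies in $\mathsf{Conv}(\mathcal{S})$.

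I expect this reverse inclusion to be the main obstacle. A naive appeal to Carath\'eodory applied directly to $\mathcal{R}_1$ would express the target point as a mixture of points of $\mathcal{R}_1$ that generally lie strictly below the corners, and such sub-corner points have no direct conditional-entropy representation, since time-sharing through $Q$ only ever produces convex combinations of genuine corner points of $\mathcal{S}$. The ``max $\ge$ average'' construction of the dominating corners $(0,\beta_1)$ and $(\beta_2,0)$ is exactly what converts every sub-corner point of $\mathcal{R}_1$ into a time-shared combination of points of $\mathcal{S}$, thereby closing the gap and establishing $\mathsf{Conv}(\mathcal{R}_1)=\mathsf{Conv}(\mathcal{S})$.
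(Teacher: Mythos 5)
Your proof is correct, and it takes a genuinely different route from the paper's. The paper argues via extreme points: it defines the vertices of $\mathsf{Conv}(\mathcal{R}_1)$, claims that every vertex on the top-right boundary must be the top-right corner $(H(Y_0|X_2),H(Y_0|X_1))$ of one of the constituent rectangles (since a vertex cannot be a weighted average of other points of the union), handles the three remaining vertices $\boldsymbol{r}_0$, $\boldsymbol{r}_1$, $\boldsymbol{r}_2$ by freezing one or both inputs to constants, and then decomposes an arbitrary point of $\mathsf{Conv}(\mathcal{R}_1)$ as a weighted average of a boundary point (itself an average of two vertices) and the origin, which yields exactly three corner points and hence $|\mathcal{Q}|=3$. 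You instead prove the set identity $\mathsf{Conv}(\mathcal{R}_1)=\mathsf{Conv}(\mathcal{S})$ directly, by dominating every achievability rectangle with the axis corner points $(\beta_2,0)$ and $(0,\beta_1)$ obtained from the max-$\geq$-average selection of $c_1,c_2$, and then obtain the three-point representation as a black-box application of Carath\'eodory's theorem in $\mathbb{R}^2$. What your route buys is rigor at the two places the paper is most informal: it avoids the figure-supported claim about which points of the convex hull of an infinite union of rectangles are vertices, and it replaces the picture-based ``boundary point plus origin'' decomposition with a standard theorem. What the paper's route buys is a concrete geometric description of the region (useful for intuition and for the later argument in the proof of Theorem~\ref{theorem}). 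The ingredient common to both proofs is the observation that fixing one user's input to a well-chosen constant symbol drives one conditional entropy to zero while keeping the other at least as large as its average---the paper uses it only for $\boldsymbol{r}_1$ and $\boldsymbol{r}_2$, whereas you use it as the engine of the whole containment argument. One small point worth making explicit in your write-up: the equality of $\mathsf{Conv}(\mathcal{S})$ with the \emph{closed} convex hull (which is what the paper's $\mathsf{Conv}(\mathcal{R}_1)$ denotes, per the Remark) follows because $\mathcal{S}$ is the continuous image of the compact set of product distributions, hence compact, so its convex hull is already closed; without this the Carath\'eodory step would only cover the algebraic hull.
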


\begin{figure}[t]
\centering
\resizebox{4cm}{!}{
\begin{picture}(0,0)%
\includegraphics{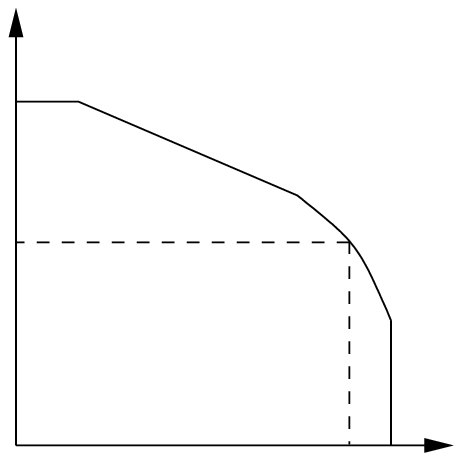}%
\end{picture}%
\setlength{\unitlength}{3947sp}%
\begingroup\makeatletter\ifx\SetFigFont\undefined%
\gdef\SetFigFont#1#2#3#4#5{%
  \reset@font\fontsize{#1}{#2pt}%
  \fontfamily{#3}\fontseries{#4}\fontshape{#5}%
  \selectfont}%
\fi\endgroup%
\begin{picture}(2430,2586)(1261,-2839)
\put(1771,-921){\makebox(0,0)[lb]{\smash{{\SetFigFont{12}{14.4}{\familydefault}{\mddefault}{\updefault}{\color[rgb]{0,0,0}$\boldsymbol{r}_a$}%
}}}}
\put(1351,-2761){\makebox(0,0)[lb]{\smash{{\SetFigFont{12}{14.4}{\familydefault}{\mddefault}{\updefault}{\color[rgb]{0,0,0}$\boldsymbol{r}_0$}%
}}}}
\put(3676,-2686){\makebox(0,0)[lb]{\smash{{\SetFigFont{12}{14.4}{\familydefault}{\mddefault}{\updefault}{\color[rgb]{0,0,0}$R_1$}%
}}}}
\put(1426,-436){\makebox(0,0)[lb]{\smash{{\SetFigFont{12}{14.4}{\familydefault}{\mddefault}{\updefault}{\color[rgb]{0,0,0}$R_2$}%
}}}}
\put(1276,-1036){\makebox(0,0)[lb]{\smash{{\SetFigFont{12}{14.4}{\familydefault}{\mddefault}{\updefault}{\color[rgb]{0,0,0}$\boldsymbol{r}_2$}%
}}}}
\put(3226,-2761){\makebox(0,0)[lb]{\smash{{\SetFigFont{12}{14.4}{\familydefault}{\mddefault}{\updefault}{\color[rgb]{0,0,0}$\boldsymbol{r}_1$}%
}}}}
\put(3331,-2056){\makebox(0,0)[lb]{\smash{{\SetFigFont{12}{14.4}{\familydefault}{\mddefault}{\updefault}{\color[rgb]{0,0,0}$\boldsymbol{r}_c$}%
}}}}
\put(2861,-1391){\makebox(0,0)[lb]{\smash{{\SetFigFont{12}{14.4}{\familydefault}{\mddefault}{\updefault}{\color[rgb]{0,0,0}$\boldsymbol{r}_b$}%
}}}}
\end{picture}%
}
\caption{The region of $\mathsf{Conv}(\mathcal{R}_1)$, which is the convex hull of $\mathcal{R}(X_1,X_2)$ for all $(X_1,X_2) \sim p(x_1)p(x_2)$. The dotted rectangle is an example of $\mathcal{R}(X'_1,X'_2)$ for some $p'(x_1)p'(x_2)$.}
\label{fig:r1} \vspace{-1em}
\end{figure}

\begin{proof}[Proof of Lemma~\ref{lemma:points}]
The region of the form  $\mathsf{Conv}(\mathcal{R}_1)$ is depicted in Fig.~\ref{fig:r1}. Define the {\em top-right boundary} as the boundary segment from $\boldsymbol{r}_2$ to $\boldsymbol{r}_1$ (via $\boldsymbol{r}_a$, $\boldsymbol{r}_b$, and $\boldsymbol{r}_c$), excluding $\boldsymbol{r}_2$ and $\boldsymbol{r}_1$.
For a convex set $\mathcal{R}$, we call any point $\boldsymbol{r} \in \mathcal{R}$ such that $\boldsymbol{r} \notin \mathsf{Conv}(\mathcal{R}) \setminus \{ \boldsymbol{r}\}$ a {\em vertex} of $\mathcal{R}$. In other words, a vertex cannot be formed by taking a weighted average of other points in $\mathcal{R}$. For example, in Fig.~\ref{fig:r1}, $\boldsymbol{r}_0$, $\boldsymbol{r}_1$, $\boldsymbol{r}_2$, $\boldsymbol{r}_a$, $\boldsymbol{r}_b$, $\boldsymbol{r}_c$, and all points from $\boldsymbol{r}_b$ to $\boldsymbol{r}_c$ on the top-right boundary are the vertices. 

Recall that $\mathsf{Conv}(\mathcal{R}_1)$ is the convex hull of the union of rectangular regions of the form $\mathcal{R}(X'_1,X'_2) \triangleq \big\{ 0 \leq R_1 \leq H(Y'_0|X'_2), 0 \leq R_2 \leq H(Y'_0|X'_1) \big\}$ for all $(X'_1,X'_2) \sim p'(x_1)p'(x_2)$. So, any vertex $\boldsymbol{r}$ on the top-right boundary of $\mathsf{Conv}(\mathcal{R}_1)$ must be a vertex of some rectangle $\mathcal{R}(X_1,X_2)$ [i.e., $(H(Y_0|X_2),H(Y_0|X_1))$], because $\boldsymbol{r}$ cannot be written as a weighted average of other points in $\mathsf{Conv}(\bigcup_{X_1,X_2} \mathcal{R}(X_1,X_2)) \setminus \boldsymbol{r}$. So, each of them must be $(H(Y'_0|X'_2),H(Y'_0|X'_1))$ for some $p'(x_1)p'(x_2)$.

We now further show that $\boldsymbol{r}_0$, $\boldsymbol{r}_1$, and $\boldsymbol{r}_2$ can each be written as $(H(Y_0|X_2),H(Y_0|X_1))$ for some $p(x_1)p(x_2)$. First, the largest value for $R_1$ in $\mathsf{Conv}(\mathcal{R}_1)$ cannot exceed
\begin{subequations}
\begin{align}
r_1^\text{max} &= \max_{p(x_1)p(x_2)} H(Y_0|X_2) \\ &= \max_{p(x_1)p(x_2)} \sum_{x_2} p(x_2)  H(Y_0|X_2=x_2) \\ &= \max_{x_2,p(x_1)} H(Y_0|X_2=x_2).
\end{align}
\end{subequations}
Let $X'_1 \sim p'(x_1)$ and $X'_2 = x'_2$ attain $r_1^\text{max}$. Using this choice of input distribution, we have $H(Y'_0 | X'_1) = 0$. So,  $(r_1^\text{max},0)= (H(Y'_0|X'_2),H(Y'_0|X'_1)) \in \mathsf{Conv}(\mathcal{R}_1)$. Since the region $\mathsf{Conv}(\mathcal{R}_1)$ is convex, $\boldsymbol{r}_1$ must attain the largest value of $R_1$ in $\mathsf{Conv}(\mathcal{R}_1)$. Therefore, $\boldsymbol{r}_1 =  (H(Y'_0|X'_2),H(Y'_0|X'_1))$. Similarly, by swapping the role of $X_1$ and $X_2$, we can show that $\boldsymbol{r}_2$ can be written as  $(H(Y''_0|X''_2)=0,H(Y''_0|X''_1))$ for some $X''_1 = x''_1$ and $X''_2 \sim p''(x_2)$. Fixing $X'''_1 = x'''_1$ and $X'''_2=x'''_2$, $\boldsymbol{r}_0 = (H(Y'''_0|X'''_2=x'''_2),H(Y'''_0|X''_1=x'''_1))$.

We have shown that all vertices in $\mathsf{Conv}(\mathcal{R}_1)$ can be written as $(H(Y_0|X_2),H(Y_0|X_1))$ for some $p(x_1)p(x_2)$. Since $\mathsf{Conv}(\mathcal{R}_1)$ is a closed convex set in a two-dimensional space, any point on its boundary can be written as a weighted average of two vertices. From Fig.~\ref{fig:r1}, we see that any interior point in $\mathsf{Conv}(\mathcal{R}_1)$ can be written as a weighted average of a boundary point and $\boldsymbol{r}_0$. Hence, any point $(r_1,r_2) \in \mathsf{Conv}(\mathcal{R}_1)$ can be written as the weighted average of three vertices (the last one being $\boldsymbol{r}_0$), i.e.,
\begin{subequations}
\begin{align}
r_1 & = \sum_{q \in \{a,b,c\}} p(q) H(Y_0|X_2)_{ p^q(x_{1}) p^q(x_{2})} \\
&= \sum_{q} p(q) H(Y_0|X_2, Q=q) \label{eq:q1} \\ &= H(Y_0 | X_2,Q), \label{eq:q2}
\end{align}
\end{subequations}
and similarly,
\begin{equation}
r_2 = H(Y_0|X_1,Q), \label{eq:q3}
\end{equation}
where \eqref{eq:q1}, \eqref{eq:q2}, and \eqref{eq:q3} are evaluated with some $p(q,x_1,x_2,y_0) = p(q) p(x_1|q)p(x_2|q)p^*(y_0|x_1,x_2)$, and $Q$ is an auxiliary (time-sharing) random variable with cardinality $|\mathcal{Q}|=3$.
\end{proof}

\subsection{Proof of Theorem~\ref{theorem}}

With Lemma~\ref{lemma:points}, we now prove Theorem~\ref{theorem}. First, define $\mathcal{R}_2^-$ to be $\mathcal{R}_2$ where \eqref{eq:r2-1} and \eqref{eq:r2-2} are strict inequalities. Since $\mathcal{R}_2$ is closed, $\overline{\mathcal{R}_2^-} = \mathcal{R}_2$. Defining $\mathcal{R}_\text{out}^- \triangleq \mathsf{Conv}(\mathcal{R}_1) \cap \mathcal{R}_2^-$, we have
\begin{equation}
\overline{\mathcal{R}_\text{out}^-} = \overline{ \mathsf{Conv}(\mathcal{R}_1) \cap \mathcal{R}_2^-} = \mathsf{Conv}(\mathcal{R}_1) \cap \overline{\mathcal{R}_2^-}  = \mathcal{R}_\text{out}.
\end{equation}

Next, define $\mathcal{R}_\text{CF}^+$ as $\mathcal{R}_\text{CF}$ where \eqref{eq:cf-1} and \eqref{eq:cf-2} are inequalities (not necessarily strict). For deterministic uplink and choosing $\hat{Y}_0= Y_0$, the RHS of \eqref{eq:cf-1} and \eqref{eq:cf-2} become $I(X_1;\hat{Y}_0|X_2,Q) = H(Y_0|X_2,Q) - H(Y_0|X_1,X_2,Q) = H(Y_0|X_2,Q)$ and $I(X_2;\hat{Y}_0|X_1,Q) = H(Y_0|X_1,Q)$ respectively. So,

{$\;$}
\vspace{-3ex}
\begin{align}
\mathcal{R}_\text{CF}^+ = &\Big\{  (R_1,R_2) \in \mathbb{R}^2_+:  \nonumber\\
& \quad R_1 \leq H(Y_0|X_2,Q), \; R_2  \leq  H(Y_0|X_1,Q), \label{eq:cf-7} \\
& \quad \text{subject to the constraints} \nonumber\\
& \quad H(Y_0|X_1,Q)< I(X_0;Y_1), \label{eq:cf-5} \\
&\quad  H(Y_0|X_2,Q) < I(X_0;Y_2), \label{eq:cf-6} \\
& \quad \text{for some } p(q)p(x_1|q) p(x_2|q) p(\hat{y}_0 | y_0)p^*(y_0|x_1,x_2), \nonumber \\
&\quad p(x_0)p^*(y_1,y_2|x_0), \text{ and } |\mathcal{Q}| \leq 4\Big\}. \nonumber
\end{align}
$\mathcal{R}_\text{CF}^+$ is chosen to include some limit points of $\mathcal{R}_\text{CF}$. Due to constrains \eqref{eq:cf-5} and \eqref{eq:cf-6}, $\mathcal{R}_\text{CF}^+$ might not include all limit points of $\mathcal{R}_\text{CF}$. So, we have that $\overline{\mathcal{R}_\text{CF}^+} = \overline{\mathcal{R}_\text{CF}}$.\footnote{$\overline{\mathcal{R}_\text{CF}^+} = \overline{\mathcal{R}_\text{CF}}$ does not hold in general if we change the strict inequalities in \eqref{eq:cf-3} and \eqref{eq:cf-4} to inequalities, i.e, relaxing the constraints, which might lead to the inclusion of additional regions $\{(R_1,R_2)\}$ specified by \eqref{eq:cf-1} and \eqref{eq:cf-2}.}

Now, pick any point $(r_1,r_2) \in \mathcal{R}_\text{out}^-$. Since $(r_1,r_2) \in \mathsf{Conv}(\mathcal{R}_1)$, using Lemma~\ref{lemma:points}, $r_1$ and $r_2$ can be written as $H(Y_0|X'_2,Q')$ and $H(Y_0|X'_1,Q')$ respectively, evaluated with some $p'(q)p'(x_1|q)p'(x_2|q)$. Also, since $(r_1,r_2) \in \mathcal{R}_2^-$, we have that  $r_1= H(Y_0|X'_2,Q') < I(X'_0;Y_2)$ and $r_2 = H(Y_0|X'_1,Q') < I(X'_0;Y_1)$ for some $p'(x_0)$. Hence $(r_1,r_2) \in \mathcal{R}_\text{CF}^+$. This means $\mathcal{R}_\text{out}^- \subseteq \mathcal{R}_\text{CF}^+$, which implies that $\overline{\mathcal{R}_\text{out}^-} \subseteq \overline{\mathcal{R}_\text{CF}^+}$.

Now, $\mathcal{R}_\text{out} = \overline{\mathcal{R}_\text{out}^-} \subseteq \overline{\mathcal{R}_\text{CF}^+} = \overline{\mathcal{R}_\text{CF}}$. From Lemma~\ref{lemma:outer}, we have $\mathcal{C} \subseteq \mathcal{R}_\text{out}$; from \eqref{eq:cf-capacity}, we have $\overline{\mathcal{R}_\text{CF}} \subseteq \mathcal{C}$. Hence, we have $\mathcal{C} = \mathcal{R}_\text{out}$. This proves Theorem~\ref{theorem}. $\hfill \blacksquare$

\section{A Simpler Proof for Achievability}  \label{sec:TWRC_simpler}

In the previous section, we set $\hat{Y}_0 = Y_0$ for CF. This suggests that we do not actually need to invoke Wyner-Ziv coding and binning in CF. Here, we present a simpler proof (achievability) for the capacity region. Consider $B$ blocks, each containing $n$ channel uses. The $\ell$-th block of the uplink transmission and the $(\ell+1)$-th block of the downlink transmissions are as follows, for all $\ell \in \{1,2,\dotsc, B-1\}$:
\begin{itemize}
\item Choose some $p(x_0)$, $p(x_1)$, and  $p(x_2)$.
\item User $i$ generates $2^{nR_i}$ length-$n$ sequences of $\boldsymbol{x}_i$ with each element randomly and independently selected according to $p(x_i)$, for $i \in \{1,2\}$. It transmits $\boldsymbol{x}_i(w_i)$. 
\item The relay receives $\boldsymbol{y}_0$, which is a deterministic function of $w_1$ and $w_2$. For the chosen codebooks $\{\boldsymbol{x}_1(w_1)\}$ and $\{\boldsymbol{x}_2(w_2)\}$, there are at most $2^{n(R_1+R_2)}$ unique sequences of $\boldsymbol{y}_0$. Index them $\boldsymbol{y}_0(w_0)$, for $w_0 \in \{1,2,\dotsc, M\}$, where $M \leq 2^{n(R_1+R_2)}$.
\item The relay randomly generates $M$ length-$n$ sequences of $\boldsymbol{x}_0$ with each element randomly and independently selected according to $p(x_0)$. It transmits $\boldsymbol{x}_0(w_0)$.
\item User 1 decodes $w_0$ from its received symbols $\boldsymbol{y}_1$ and its own message $w_1$. It can do so reliably\footnote{With arbitrarily small error probability} if $R_2 < I(X_0;Y_1)$. We present the proof in Appendix~\ref{appendix:a}.
\item Using the same argument, user 2 can reliably decode $w_0$ if $R_1 < I(X_0;Y_2)$.
\item From $w_0$, both users know $\boldsymbol{y}_0$ exactly. User 1 can recover $w_2$ reliably and user 2 can recover $w_1$ reliably if $R_1 < I(X_1;Y_0|X_2)$ and $R_2 < I(X_2;Y_0|X_1)$. We present the proof in Appendix~\ref{appendix:b}.
\item Each user $i$ transmits $(B-1)$ messages, each of size $2^{nR_i}$, over $Bn$ channel uses. If the users can reliably decode their requested messages, the rate $(\frac{R_1(B-1)}{B}, \frac{R_1(B-1)}{B})$ is achievable. Choosing a sufficiently large $B$, this scheme achieves rates arbitrarily close to $(R_1,R_2)$.
\end{itemize}
So, this scheme achieves any rate pair $(R_1,R_2) \in \mathcal{R}_4$, where
\begin{align}
\mathcal{R}_4 = \Big\{  &(R_1,R_2) \in \mathbb{R}^2_+:  \nonumber\\
& \quad R_1 < \min \{ I(X_1;Y_0|X_2), I(X_0;Y_2)\}\\
& \quad R_2 < \min \{I(X_2;Y_0|X_1), I(X_0;Y_1)\},\\
& \quad \text{for some } p(x_1,x_2,y_0) = p(x_1) p(x_2) p^*(y_0|x_1,x_2) \nonumber \\
& \quad \text{and } p(x_0,y_1,y_2) =p(x_0)p^*(y_1,y_2|x_0) \Big\}. \nonumber
\end{align}
Using time sharing, the region $\mathsf{Conv}(\mathcal{R}_4)$ is achievable. Without the Wyner-Ziv coding constraints \eqref{eq:cf-3} and \eqref{eq:cf-4}, $\mathcal{R}_4$ is in the same form as $\mathcal{R}_1 \cap\mathcal{R}_2$.  We can show that $\overline{\mathsf{Conv}(\mathcal{R}_4 )} = \mathcal{R}_\text{out} = \mathcal{C}$.

\begin{remark}
One can also show that noisy network coding (NNC)~\cite{limkimelgamal11}, which generalizes CF without using Wyner-Ziv coding, with $\hat{Y}_0=Y_0$ also achieves $\mathsf{Conv}(\mathcal{R}_4 )$. Using NNC, encoding is also performed over $B$ blocks (with $B$ being sufficiently large), but decoding at the users is performed only after the last transmission block. Hence, NNC incurs a larger transmission-to-decoding delay and involves a more complicated decoding scheme (simultaneous decoding over all $B$ blocks) compared to the coding scheme described here.
\end{remark}

\section{The Unrestricted Case}

For the unrestricted TWRC, the problem is hard even with deterministic links. Suppose the uplink is a multiplier channel $Y_0 = X_1 X_2$, where $X_1,X_2 \in \{0,1\}$, and the downlink consists of two noiseless orthogonal channels: $X_0 = (X_0^a, X_0^b)$,  $Y_1 = X_0^a$, $Y_2 = X_0^b$, where $X_0^a, X_0^b \in \{0,1\}$. Without loss of optimality\footnote{Since the downlink is noiseless, any processing that the relay could have done (e.g., decoding or quantizing) can also be done at the users.}, the relay transmits $X_0^a = X_0^b = Y_0$. With this, we convert the unrestricted TWRC to the (unrestricted) multiplying two-way channel~\cite[pg.\ 634]{shannon61} where the capacity remains unknown to date.


\appendices

\section{} \label{appendix:a}
Define the following subset of indices of the relay's codewords $\boldsymbol{x}_0(w_0)$, which have a one-to-one mapping to its received symbols $\boldsymbol{y}_0(w_0)$:
\vspace{-0.8ex}
\begin{align*}
\mathcal{S}_{w_1}(a) = \Big\{ &w_0 \in \{1,2,\dotsc, M\}: \boldsymbol{y}_0(w_0) = f( \boldsymbol{x}_1(a),\boldsymbol{x}_2(k)),\\ &\text{for some } k \in \{1,2,\dotsc,2^{nR_2}\} \Big\}.
\end{align*}
Fixing user 1's message $w_1 = a$, $\mathcal{S}_{w_1}(a)$ is the set of indices $w_0$ for which the corresponding $\boldsymbol{y}_0(w_0)$ are possible uplink outputs when the inputs are $\boldsymbol{x}_1(a)$ and $\boldsymbol{x}_2(k)$ for some $k$. Clearly, $|\mathcal{S}_{w_1}(a)| \leq 2^{nR_2}$ for any $a \in \{1,2,\dotsc, 2^{nR_1}\}$.

Assuming that user 1 sent $\boldsymbol{x}_1(w_1 = a)$, and the relay sent $\boldsymbol{x}_0(w_0=b)$. Receiving $\boldsymbol{y}_1$, user 1 declares that $\hat{w}_0$ was sent by the relay if it can find a unique index $\hat{w}_0 \in \mathcal{S}_{w_1}(a)$ where $(\boldsymbol{x}_0(\hat{w}_0),\boldsymbol{y}_1) \in \mathcal{A}_\epsilon^{(n)}(X_0,Y_1)$, where $\mathcal{A}_\epsilon^{(n)}(X_0,Y_1)$ is the set of jointly typical sequences~\cite[pg.\ 195]{coverthomas06}. Define the following events that can lead to incorrect decoding of $w_0$:\\
\indent\indent $ E_1:(\boldsymbol{x}_0(b),\boldsymbol{y}_1) \notin \mathcal{A}_\epsilon^{(n)}(X_0,Y_1)$\\
\indent\indent $E_2: (\boldsymbol{x}_0(c),\boldsymbol{y}_1) \in \mathcal{A}_\epsilon^{(n)}(X_0,Y_1)$, for some $c \neq b$.

By definition, $b \in \mathcal{S}_{w_1}(a)$, and using the asymptotic equipartition property (AEP)~\cite[pg.\ 196]{coverthomas06}, we have that $\Pr\{E_1\} < \epsilon$, and
\begin{subequations}
\begin{align}
&\Pr\{E_2\} \leq \sum_{c \in \mathcal{S}_{w_1}(a) \setminus \{b\}} (\boldsymbol{x}_0(c),\boldsymbol{y}_1) \in \mathcal{A}_\epsilon^{(n)}(X_0,Y_1) \label{eq:union}\\
& \leq \sum_{c \in \mathcal{S}_{w_1}(a) \setminus \{b\}} 2^{-n(I(X_0;Y_1)-3\epsilon)}  \label{eq:aep}\\
& \leq (2^{nR_2}-1) 2^{-n(I(X_0;Y_1)-3\epsilon)}  < 2^{n(R_2 - I(X_0;Y_1) + 3\epsilon)} \leq 2^{-n\epsilon}, \nonumber
\end{align}
\end{subequations}
if $R_2 \leq I(X_0;Y_1) - 4 \epsilon$. Here, \eqref{eq:union} is due to the union bound, and \eqref{eq:aep} is due to the AEP. Hence, if $R_2 < I(X_0;Y_1)$, we can choose $\epsilon$ and $n$ such that  $\Pr\{\hat{w}_0 \neq w_0\} \leq \Pr\{E_1\} + \Pr \{E_2\} < \epsilon + 2^{-n\epsilon} \triangleq \eta$, where $\eta$ is arbitrarily small.

\section{} \label{appendix:b}

We use the results of the two-way channel~\cite{shannon61}, where
two nodes, say $A$ and $B$, exchange data through the channel
$p^*(y_A,y_B|x_A,x_B)$. It has been shown that the nodes can
reliably exchange data if node $A$ transmits at rate $R_A <
I(X_A;Y_B|X_B)$ and node $B$ transmits at rate $R_B <
I(X_B;Y_A|X_A)$ for some $p(x_A)p(x_B)$ \cite{shannon61}. To apply
this result to the TWRC, we set $R_A = R_1$, $R_B=R_2$, $X_A=X_1$,
$X_B=X_2$, and $Y_0=Y_A=Y_B$. Knowing $Y_0 = Y_A = Y_B$, user 1 can
reliably decode $w_2$ and user 2 can reliably decode $w_1$ if
$R_1 < I(X_1;Y_0|X_2)$ and $R_2 < I(X_2;Y_0|X_1)$.



\end{document}